\title{Generalized Secret Sharing using Permutation Ordered Binary System}
\author{Binu.~V.~P\inst{1} \and A.~Sreekumar\inst{1}}
\institute{Cochin University of Science and Technology, Cochin 22, India}
\newtheorem{defn}{Definition}
\newtheorem{algorithm}{Algorithm}
\newcommand{\nCr}[2] {\mbox{ \( \left(\; \begin{array}{c}
                      #1 \\#2
               \end{array} \;
       \right) \) }}
\begin{document}
\maketitle
\begin{abstract}
Secret sharing is a method of dividing a secret among $n$ participants and allows only qualified subset to reconstruct the secret and hence provides better reliability and availability of secret data.In the generalized secret sharing scheme, a monotone access structure of the set of participants is considered. The access structure specifies a qualified subset of participants who can  reconstruct the secret from their shares.Generalized secret sharing schemes can be efficiently implemented by using $(n,n)$ scheme.We have developed an efficient $(n,n)$ scheme using  Permutation Ordered Binary (POB) number system which is then combined with cumulative arrays to obtain a generalized secret sharing scheme.
\end{abstract}
\section{Introduction}

Secret sharing schemes are important tool used in security protocols.Originally motivated by the problem of secure key storage by Shamir\cite{shamir1979}, secret sharing schemes have found numerous other applications in cryptography and distributed computing.Threshold cryptography\cite{desmedt1992shared}, access control\cite{naor1998access},secure multi party computation\cite{ben1988completeness}\cite{chaum1988multiparty}\cite{cramer2000general},attribute based encryption\cite{goyal2006attribute}\cite{bethencourt2007ciphertext},generalized oblivious transfer\cite{tassa2011generalized}\cite{shankar2008alternative},visual cryptography \cite{naor1995visual} etc \ldots are the significant areas of development using the secret sharing techniques.

In secret sharing, the secret is divided among $n$ participants in such a way that only designated subset of participants can recover the secret, but any subset of participants which is not a designated set cannot recover the secret.
A set of participants who can recover the secret is called an \textit{access structure},or \textit{authorized set}, and a set of participants which is not an authorized set is called an \textit{unauthorized set} or \textit{forbidden set}.Let $\mathcal{P}={P_i|i=1,2,\ldots,n}$ be the set of participants and the secret be $K$ .The set of all secret is represented by $\mathcal{K}$.The set of all shares $S_1,S_2,\ldots,S_n$ is represented by $\mathcal{S}$.The participants set is partitioned into two classes.\\
\begin{enumerate}
\item The class of authorized sets $\mathcal{A}$ is called the \textit{access structure.}
\item The class of unauthorized sets $\mathcal{A}^c=2^\mathcal{P}\setminus \mathcal{A}$
\end{enumerate}

We assume that $\mathcal{P},\mathcal{K},\mathcal{S}$ are all finite sets and there is a probability distribution on $\mathcal{K}$ and $\mathcal{S}$.We use $H(\mathcal{K})$ and $H(\mathcal{S})$ to denote the entropy of $\mathcal{K}$ and $\mathcal{S}$ respectively.

In a secret sharing scheme there is a special participant called \textit{Dealer} $\mathcal{D} \notin \mathcal{P}$, who is trusted by everyone. The dealer chooses a secret $K \in \mathcal{K}$ and the shares $S_1,S_2,\ldots,S_n$ corresponding to the secret is generated.The shares are then distributed privately to the participants through a secure channel.

In the secret reconstruction phase, participants of an access set pool their shares together and recover the secret.Alternatively participants could give their shares to a combiner to perform the computation for them.If an unauthorized set of participants pool their shares they cannot recover the secret.Thus a secret sharing scheme for the access structure $\mathcal{A}$ is the collection of two algorithms:\\ \\
\textbf{Distribution Algorithm}:This algorithm has to be run in a secure environment by a trustworthy party called Dealer. The algorithm uses the function $f$ ,which for a given secret $K \in \mathcal{K}$ and a participant $P_i \in \mathcal{P}$, assigns a set of shares from the set $\mathcal{S}$ that is $f(K,P_i)=S_i \subseteq \mathcal{S}$ for $i=1,\ldots,n$.$$f:\qquad \mathcal{K}\times \mathcal{P} \implies 2^\mathcal{S}$$\\ \\
\textbf{Recovery Algorithm}:This algorithm has to be executed collectively by cooperating participants or by the combiner ,which can be considered as a process embedded in a tamper proof module and all participants have access to it.The combiner outputs the generated result via secure channels to cooperating participants.The combiner applies the function $$g:\mathcal{S}^t \implies \mathcal{K}$$, to  calculate the secret.For any authorized set of participants $g(S_1,\ldots,S_t)=K$ if ${P_1,\ldots,P_t} \subseteq \mathcal{A}$.If the group of participant belongs to an unauthorized set, the combiner fails to compute the secret.

A secret sharing scheme is called perfect if for all sets $B$, $ B \subset \mathcal{P}$ and $B \notin \mathcal{A}$, if participants in $B$ pool their shares together they cannot reduce their uncertainty about $S$. That is, $H(K)=H(K|\mathcal{S}_B)$,where $\mathcal{S}_B$ denote the collection of shares of the participants in $B$.It is known that for a perfect secret sharing scheme $H(S_i) \geq H(K)$.If $H(S_i) = H(K)$ then the secret sharing scheme is called ideal.

An access structure $\mathcal{A}_1$ is \textit{minimal} if $\mathcal{A}_2 \subset \mathcal{A}_1$ and $\mathcal{A}_2 \in \mathcal{A}$ implies that $\mathcal{A}_2=\mathcal{A}_1$.Only \textit{monotone access structure} is considered for the construction of the scheme in which $\mathcal{A}_1 \in \mathcal{A}$ and $\mathcal{A}_1 \subset \mathcal{A}_2$ implies $\mathcal{A}_2 \in \mathcal{A}$.The collection of minimal access sets uniquely determines the access structure.The access structure is the closure of the minimal access set.The access structure $\mathcal{A}$ in terms of minimal access structure is represented by $\mathcal{A}_{min}$.

For an access structure $\mathcal{A}$, the family of unauthorized sets $\mathcal{A}^c=2^\mathcal{P} \setminus \mathcal{A}$ has the property that given an unauthorized set $B \in \mathcal{A}^c$ then any subset $C \subset B$ is also an unauthorized set.An immediate consequence of this property is that for any access structure $\mathcal{A}$, the set of unauthorized sets can be uniquely determined by its \textit{maximal set}.We use $\mathcal{A}^c_{max}$ to denote the representation of $\mathcal{A}^c$ in terms of maximal set.

For all $B \in \mathcal{A}$.If $|B| \ge t$ then the access structure corresponds to  a $(t,n)$ threshold scheme.In the $(t,n)$ threshold scheme $t$ or more participant can reconstruct the secret.

Section 2 gives survey of secret sharing schemes.Secret sharing schemes realizing general access structures are mentioned
in section 3.Cumulative array implementation of generalized secret sharing is mentioned in section 4.The POB system is introduced in section 5.The proposed scheme and conclusions are given in section 6 and section 7.
\section{Secret Sharing Schemes}
The idea of \textit{ secret sharing} is to start with a secret, and divide it into pieces called \textit{shares} or \textit{shadows} which are distributed amongst users such that the pooled shares of authorized subsets of users allow reconstruction of the original secret.

Development of secret sharing scheme started as a solution to the problem of safeguarding cryptographic keys by distributing the key among $n$ participants and $t$ or more of the participants can recover it by pooling their shares.Thus the authorized set is any subset of participants containing more than $t$ members.This scheme is denoted as $(t,n)$ \textit{threshold scheme}.

The following are the two fundamental requirements of any secret sharing scheme.

\begin{itemize}
\item \textbf{Recoverability:}Authorized subset of participants should be able to recover the secret by pooling their shares.
\item \textbf{Privacy:}Unauthorized subset of participants should not learn any information about the secret.
\end{itemize}

The notion of a threshold secret sharing scheme is independently proposed by Shamir \cite{shamir1979} and Blakley \cite{blakley1979} in 1979.Since then much work has been put into the investigation of such schemes.Linear constructions were most efficient and widely used.
A threshold secret sharing scheme is called perfect, if less than $t$ shares give no information about the secret.Shamir's scheme is perfect while Blakley's scheme is non perfect.Both the Blakley
and the Shamir constructions realize $t$-out-of-$n$ shared secret schemes.However, their constructions are fundamentally different.

Shamir's scheme is based on   polynomial interpolation over a finite field.It uses the fact that we can find a polynomial of degree $t-1$ given $t$ data points. To generate a polynomial $f(x)=\sum_{i=0}^{t-1}a_ix^i$,$a_0$ is set to the secret value and the coefficients $a_1$ to $a_{t-1}$ are assigned random values in the field.The value $f(i)$ is given to the user $i$.When $t$ out of $n$ users come together they can reconstruct the polynomial using Lagrange interpolation and hence obtain the secret.

Blakley's secret sharing scheme has a different approach and is
based on hyperplane geometry. To implement a $(t,n)$threshold scheme, each of the $n$ users is given a hyper-plane equation in a
$t$ dimensional space over a finite field such that each hyperplane passes through a certain point.The intersection point of these hyperplanes is the secret.When $t$ users come together, they can solve the system of equations to find the secret.

 McEliece and Sarwate \cite{mceliece1981sharing} made an observation that Shamir's scheme is closely related to Reed-Solomon codes\cite{reed1960polynomial}.The error correcting capability of this code can be translated into desirable secret sharing properties.
    
Let $(\alpha_1,\alpha_2,\ldots,\alpha_{r-1})$ be a fixed list of the non zero elements in a finite field $F$ with $r$ elements.In one form of Reed-Solomon coding,an information word $a=(a_0,a_1,\ldots,a_{k-1}),a_i \in F$ is encoded into code word $D=(D_1,D_2,\ldots,D_{r-1})$, where $D_i=\sum_{j=0}^{k-1}a_j\alpha_i^j$.The secret is $a_0= - \sum_{i=1}^{r-1}D_i$, while the pieces of the secret are $D_i$'s.     
 If given $h$ shares but $t$ of these are in error.Then by applying errors and erasures decoding algorithm it is possible to recover $D$ and $a$, provided that $h-2t \ge k$.This shows that if $t$ pieces have been tampered, the secret can still be accessed by legitimate users provided that at least $k+t$ valid pieces are available.In the case of a $(k,n)$ threshold scheme, the opponent must tamper $\lfloor(n-k)/2\rfloor$ pieces to ensure that the secret is inaccessible.
       
 Karnin et al \cite{karnin1983} realize threshold schemes using linear codes.Massey \cite{massey1993minimal} introduced the concept of minimal code words, and provided that the access structure of a secret sharing scheme based on a $[n,k]$ linear code is determined by the minimal codewords of the dual code.
       

       
Number theoretic concepts are also introduced for threshold secret sharing scheme.The Mingotee scheme\cite{mignotte1983} is based on modulo arithmetic and \textbf{Chinese Remainder Theorem (CRT)}. A special sequence of integers called Mingotte sequence is used here.
Let $n$ be an integer $n \geq 2$ , and $2 \le k \le n$.A $(k,n)$ Mingotte sequence is a sequence of pairwise coprime positive integers $p_1 < p_2 \cdots < p_n $ such that $ \prod_{i=0}^{k-2} p_{n-i} < \prod_{i=1}^{k} p_{i} $.The shares are generated using the sequence.The secret is reconstructed by solving the set of congruence equation using CRT.
         
The Mingotte's scheme is not perfect.A perfect scheme based on CRT is 
 proposed by Asmuth and Bloom \cite{asmuth1983}.They also uses a special sequence of pairwise coprime positive integers $p_{0},p_{1} < \cdots < p_{n}$ such that $ p_{0}\cdot\prod_{i=0}^{k-2} p_{n-i} < \prod_{i=1}^{k} p_{i} $.
 

Kothari \cite{kothari1985generalized} gave a generalized threshold scheme.A secret is represented by a scalar and a linear variety is chosen to conceal the secret.A linear function known to all trustees is chosen and is fixed in the beginning, which is used to reveal the secret from the linear variety.The $n$ shadows are hyperplanes containing the liner variety.Moreover the hyperplanes are chosen to satisfy the condition that , the intersection of less than $t$ of them results in a linear variety which projects uniformly over the scalar field by the linear functional used for revealing the secret . The number $t$ is called the threshold. Thus as more shadows are known more information is revealed about the linear variety used to keep the secret, however, no information is revealed until the threshold number of shadows are known.
He had shown that Blakley's scheme and Karin's scheme are equivalent and provided algorithms to convert one scheme to another.He also stated that the schemes are all specialization of generalized linear threshold scheme.

Brickell\cite{brickell1989some} also give a generalized notion of Shamir and Blackleys schemes. The basic secret sharing scheme mentioned is as follows.
 
The secret is an element in some finite field $\mathbb{GF}(q)$. The dealer chooses a vector $a=(a_{0},\ldots,a_{t})$ for some $t$, where each $a_{j} \in  \mathbb{GF}(q)$, and $a_{0}$ is the secret.Denote the participants by $P_{i}$, for $1 \le i \le n$.For each $P_{i}$, the dealer will pick a $t-$dimensional vector $v_{i}$ over $ \mathbb{GF}(q)$.All of the vectors $v_{i}$, for $1 \le i \le n$ will be made public.The share that the dealer gives to $P_{i}$ will be $S_{i}=v_{i}\cdot a$. Let $e_{i}$ denote the $i'$ th $t-$ dimensional unit coordinate vector ( i.e. $ e_{1}=(1,0,\ldots,0)$). 
The participants in $\mathcal{P}$ can determine the secret if the subspace  $<v_{i1},\ldots,v_{ik}>$ contains $e_{1}$.
The participants in $\mathcal{P}$ receive no information about the secret if the subspace $<v_{i1},\ldots,v_{ik}>$ does not contain $e_{1}$.

Researchers have investigated $(t, n)$ threshold secret sharing extensively.Threshold schemes that can handle more complex access structures have been described by Simmons \cite{simmons1992} like weighted threshold schemes, hierarchical scheme,compartmental secret sharing etc.They were found a wide range useful of applications.

\section{Generalized Secret Sharing Schemes}

In the previous section, we mentioned that any  $t$ of the $n$ participants should be able to determine the secret. A more general situation is to specify exactly which subsets of participants should be able to determine the secret and which subset should not.In this section we give the secret sharing constructions based on generalized access structure.

Shamir \cite{shamir1979} discussed the case of sharing a secret between the executives of a company such that the secret can be recovered by any three executives, or by any executive and any vice-president, or by the president alone. This is an example of  \textit{hierarchical secret sharing} scheme. The Shamir’s solution for this case is based on an ordinary $(3,m)$ threshold secret sharing scheme. Thus, the president receives three shares, each vice-president receives two shares and, finally, every  executive receives a single share.

The above idea leads to the so-called weighted(or multiple shares based) threshold secret sharing schemes. In these schemes, the shares are pairwise disjoint sets of shares provided by an ordinary threshold secret sharing scheme. Benaloh and Leichter have proven in \cite{benaloh1990generalized} that there are access structures that can not be realized using such scheme.The theorem and proof with an example stated by them is given below.
 \begin{theorem}
There exist monotone access structure for which there is no threshold scheme.
\end{theorem}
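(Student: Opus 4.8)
The plan is to exhibit a single concrete monotone access structure and then show, by contradiction, that no weighted threshold realization of the kind described just above the theorem can reproduce it. First I would fix four participants $\mathcal{P} = \{P_1, P_2, P_3, P_4\}$ and take $\mathcal{A}$ to be the closure under supersets of the minimal family $\mathcal{A}_{min} = \{\{P_1,P_2\}, \{P_3,P_4\}\}$. By construction this is monotone, hence a legitimate access structure. The decisive structural feature is that the two ``crossing'' pairs $\{P_1,P_3\}$ and $\{P_2,P_4\}$ each fail to contain a minimal set and are therefore unauthorized, even though their members collectively exhaust the same four participants as the two authorized pairs.

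Next I would assume, for contradiction, that $\mathcal{A}$ is realized by a weighted threshold scheme. In such a scheme each $P_i$ receives $w_i$ shares issued by one ordinary threshold scheme with threshold $t$, and a subset $B$ is authorized precisely when the shares it holds reach the threshold, that is, when $\sum_{i \in B} w_i \geq t$. Translating the facts above into this language gives the two authorized inequalities $w_1 + w_2 \geq t$ and $w_3 + w_4 \geq t$, together with the two unauthorized inequalities $w_1 + w_3 < t$ and $w_2 + w_4 < t$.

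The finish is a counting step. Summing the two authorized inequalities yields $w_1 + w_2 + w_3 + w_4 \geq 2t$, while summing the two unauthorized inequalities yields $w_1 + w_3 + w_2 + w_4 < 2t$. Both left-hand sides are the identical total weight $w_1 + w_2 + w_3 + w_4$, so we are forced to conclude $2t \leq w_1 + w_2 + w_3 + w_4 < 2t$, a contradiction. Hence no weighted threshold scheme realizes $\mathcal{A}$, which proves the theorem.

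The step I expect to carry the real content is not any one calculation but the initial choice of access structure: one must select minimal sets whose disjoint union can be re-paired into unauthorized crossing sets covering exactly the same participants, so that the authorized and unauthorized weight bounds collide on an identical sum. The conceptual point is that every genuinely threshold structure is separated by the single linear functional $B \mapsto \sum_{i \in B} w_i$ against the cutoff $t$, and the essence of the argument is to display a monotone structure for which no such linear separation can exist. The four-participant example above is the smallest such witness.
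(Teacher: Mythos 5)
Your proof is correct and takes essentially the same approach as the paper: you use the identical four-participant counterexample $\mathcal{A}_{min}=\{\{P_1,P_2\},\{P_3,P_4\}\}$ and refute a hypothetical weighted threshold realization via its weight inequalities. The only difference is a cosmetic one in the final step: where the paper assumes without loss of generality that $a \geq b$ and $c \geq d$ and averages to conclude $a + c \geq t$ (so the single crossing pair $\{A,C\}$ would be authorized), you sum the inequalities for both crossing pairs $\{P_1,P_3\}$ and $\{P_2,P_4\}$ to get $2t \leq w_1+w_2+w_3+w_4 < 2t$ directly, which is a slightly more symmetric bookkeeping that dispenses with the renaming step.
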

\begin{proof}
Consider the access structure $\mathcal{A}$ defined by the formula $\mathcal{A}_0 = \mbox{\{AB,CD\},}$ and assume that a threshold scheme is to be used to divide a secret value $S$ among $A, B, C,$ and $D$ such that only those subsets of ${A,B,C,D}$ which are in $\mathcal{A}$ can reconstruct $S.$
               
Let $a, b, c,$ and $d$ respectively denote the weight (number of shares) held by each of $A, B, C,$ and $D.$ Since $A$ together with $B$ can compute the secret, it must be the case that $a + b \geq t$ where $t$ is the value of the threshold. Similarly, since $C$ and $D$ can together compute the secret, it is also true that $c + d \geq t.$
 Now assume without loss of generality that $a \geq b$ and $c \geq d.$ (If this is not the case, the variables can be renamed.) Since $a + b \geq t$ and $a \geq b, a + a \geq a + b \geq t.$ So $a \geq t/2.$ Similarly, $c \geq t/2.$ Therefore, $a + c \geq t.$
 Thus, $A$ together with $C$ can reconstruct the secret value $S$. This violates the assumption of the access structure.
\end{proof}

Several researchers address this problem and introduced secret sharing schemes realizing the general access structure.The most effecient and easy to implement scheme was Ito, Saito,Nishizeki's \cite{ito1989secret} construction.It is based on Shamir's scheme.The idea is to  distribute shares to each authorized set of participants using multiple assignment scheme where more than one share is assigned to a participant if he belongs to more than one minimal authorized subset.

A simple scheme mentioned by Beimel \cite{beimel2011secret} in which the secret $S \in {0,1}$ and let $ \mathcal{A}$ be any monotone access structure. The dealer shares the secret independently for each authorized set
$ B \in \mathcal{A} $,where $B=\{P_{i1},\ldots,P_{il}\}$.     
The Dealer chooses $l-1$ random bits $r_{1},\ldots,r_{l-1}$.
Compute $r_{l}= S \oplus r_{1} \oplus r_{2} \oplus \cdots \oplus r_{l-1}$, and the Dealer distributes share $r_{j}$ to $P_{ij}$.
For each set $ B \in \mathcal {A}$, the random bits are chosen independently and each set in $\mathcal{A}$ can reconstruct the secret by computing the exclusive-or of the bits given to the set.The unauthorized set cannot do so.
    
The disadvantage with multiple share assignment scheme is that the share size depends on the number of authorized set that contain $P_{j}$.A simple optimization is to share the secret $S$ only for minimal authorized sets.Still this scheme is inefficient for access structures in which the number of minimal set is big (Eg:$(n/2,n)$ scheme ).The share size grows exponentially in this case.
     
Benalohand Leichter \cite{benaloh1990generalized} developed a secret sharing scheme for an access structure based on monotone formula.This generalizes the multiple assignment scheme of Ito,Saito and Nishizeki \cite{ito1989secret}.The idea is to translate the monotone access structure into a monotone formula.Each variable in the formula is associated with a trustee in $\mathcal{P}$ and the value of the formula is \textit{true} if and only if the set of variables which are true corresponds to a subset of $\mathcal{P}$ which is in the access structure. This formula is then used as a template to describe how a secret is to be divided into shares.
    
The monotone function contains only AND and OR operator.To divide secret $S$ into shares such that  $P_{1} \; or \; P_{2}$ can reconstruct $S$.In this case $P_{1}$ and  $P_{2}$ can simply both be given values $S$.If $P_{1} \; and \; P_{2}$ need to reconstruct secret then $P_{1}$ can be given value $S_{1}$ and $P_{2}$ can be given value $S_{2}$ such that $S=S_{1}+S_{2} \;mod \; m$,$(0 \le S \le m)$,$s_{1}$ is chosen randomly from $\mathbb{Z}_{m}$,$S_{2}$ is $(S-S_{1}) \; mod \; m$.
    
More exactly, for a monotone authorized access structure $\mathcal{A}$ of size $n,$ they defined the set $\mathcal{F_A}$ as the set of formula on a set of variables $\{v_1,v_2,\ldots,v_n\}$ such that for every $\mathcal{F} \in \mathcal{F_A},$ the interpretation of $\mathcal{F}$ with respect to an assignation of the variables is true if and only if the true variables correspond to a set $A \in \mathcal{A.}$ They have remarked that such formula can be used as templates for describing how a secret can be shared with respect to the given access structure. Because the formula can be expressed using only $\wedge$ operators and $\vee$ operators, it is sufficient to indicate how to "split" the secret across these operators.

Thus, we can inductively define the shares of a secret $S$ with respect to a formula $\mathcal{F}$ as follows:
\begin{equation}
\small
Shares(S, F) = \left\{\!\! \begin{array}{ll}
\,(S, i), & \mbox{\!\!if $F = v_i,\, 1 \leq i \leq n;$} \\
\bigcup_{i=1}^{k} Shares(S, F_i), & \mbox{\!\!if $F = F_1\vee  \cdots\vee F_k;$} \\
\bigcup_{i=1}^{k} Shares(s_i, F_i), & \mbox{\!\!if $F = F_1\wedge \cdots\wedge F_k,$}
\end{array}
\right. \nonumber
\end{equation}
where, for the case $F = F_1\wedge F_2 \wedge \cdots\wedge F_k,$ we can use any ($k, k$)-threshold secret sharing scheme for deriving some shares $s_1,\ldots, s_k$ corresponding to the secret $S$ and, finally, the shares as $I_i = \{s|(s,i) \in Shares(S,F)\},$ for all \mbox{$1 \leq i \leq n,$} where, $F$ is an arbitrary formula in the set $\mathcal{F_A}.$

Brickell \cite{brickell1991classification}developed some ideal schemes for generalized access structure using vector spaces.Stinson \cite{stinson1992explication} introduced a monotone circuit construction based on monotone formula and also the construction based on public distribution rules.Benaloh's scheme was generalized by Karchmer and Wigderson \cite{karchmer1993span}who showed that if an access structure can be described by a small monotone span program then it has an efficient scheme.The proposed generalized secret sharing scheme make use of the cumulative arrays for the generalized secret sharing which is given in the next section.

\section {Cumulative Secret Sharing Scheme}
      
Cumulative schemes were first introduced by Ito et al \cite{ito1989secret} and then used by several authors to construct a general scheme for arbitrary access structures.Simmons \cite{simmons1992} proposed cumulative map, Jackson \cite{jackson1993cumulative} proposed a notion of cumulative array.Ghodosi et al \cite{ghodosi1998construction} introduced simpler and more efficient scheme and also introduced capabilities to detect cheaters. 
Generalized cumulative arrays in secret sharing is introduced by Long \cite{long2006generalised}.
 \begin{defn} 
      Let $\mathcal{A}$ be a monotone authorized access structure on a set of participants $\mathcal{P}$.A cumulative scheme for the access structure $\mathcal{A}$ is map $\alpha:\mathcal{P} \longrightarrow 2^S$, where $S$ is some set. such that for any $\mathcal{A} \subseteq P$,
      $$ \bigcup_{P_i \in \mathcal{A}} \alpha(P_i)=S$$ 
 \end{defn}
      
The scheme can be written as a $|\mathcal{P}|\times |S|$ array $M=[m_{ij}]$, where row $i$ of the matrix $M$ is indexed by $p_i \in P$ and column $j$ of the matrix $M$ is indexed by an element $s_j \in S$, such that $m_{ij}=1$ if and only if $P_i$ is given $s_j$, otherwise $m_{ij}=0$.

    \begin{defn}
      Let $\mathcal{A}$ be an access structure over the set of participants $\mathcal{P}=\{P_1\ldots ,P_n\}$ and    $\mathcal{A}_{min}=\{\mathcal{A}_1,\ldots,\mathcal{A}_{l}\}$ is the set of all minimal set of $\mathcal{A}$. Then the \textbf{incident array} of $\mathcal{A}$ is a $l \times n$ Boolean matrix $I_{\mathcal{A}}=[a_{ij}]$ defined by,
      \[
      a_{ij}=
   \begin{cases}
    1 \qquad & \text{if} \quad P_j \in \mathcal{A}_i \\
    0 \qquad & \text{if}  \quad P_j \notin \mathcal{A}_i
   \end{cases}
   \]
 for $1 \le j \le n$ and $1 \le i \le l$
 \end{defn}
 \begin{defn}
Let $\mathcal{A}_{max}^c=\{B_1,\ldots,B_m\}$ be the set of all maximal unauthorized sets. The \textbf{cumulative array} $C_{\mathcal{A}}$ for $\mathcal{A}$ is an $n \times m$ matrix $C_{\mathcal{A}}=[b_{ij}]$, where each row of the matrix is indexed by a participant $P_i \in \mathcal{P}$ and each column is indexed by a maximal unauthorized set $B_j \in \mathcal{A}_{max}^c$, such that the entries $b_{ij}$ satisfy the following:
       \[
            b_{ij}=
         \begin{cases}
          0 \qquad & \text{if} \quad P_i \in \mathcal{B}_j\\
          1 \qquad & \text{if}  \quad P_i \notin \mathcal{B}_j
         \end{cases}
         \]
         for $1 \le i \le n$ and $1 \le j \le m$
 \end{defn} 
      It is noted that following theorem is true and proved in \cite{ghodosi1998construction}.
 \begin{theorem}
      If $\alpha_i$ is the $i'$th row of the cumulative array $C_{\mathcal{A}}$ , then $\alpha_{i1}+\cdots+\alpha_{it}=\overrightarrow{1}$ if and only if $\{P_{i1},\ldots,P_{it}\} \in \mathcal{A}$
 \end{theorem}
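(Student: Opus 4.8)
The plan is to unpack the defining rule of the cumulative array one column at a time and to translate the vector equation $\alpha_{i1}+\cdots+\alpha_{it}=\overrightarrow{1}$ into a containment statement about the maximal unauthorized sets. Here the sum must be read coordinatewise as Boolean OR (union of the share sets) and $\overrightarrow{1}$ as the all-ones row vector of length $m$; under that reading the identity $\alpha_{i1}+\cdots+\alpha_{it}=\overrightarrow{1}$ says precisely that the participants $P_{i_1},\ldots,P_{i_t}$ jointly cover every column of $C_{\mathcal{A}}$.

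First I would fix a column index $j$, which corresponds to the maximal unauthorized set $B_j\in\mathcal{A}^c_{max}$, and inspect the $j$-th coordinate of the sum. By the definition of the cumulative array, the entry $b_{i_k j}$ equals $1$ exactly when $P_{i_k}\notin B_j$ and equals $0$ when $P_{i_k}\in B_j$. Consequently the $j$-th coordinate of $\alpha_{i1}+\cdots+\alpha_{it}$ is $1$ if and only if at least one of $P_{i_1},\ldots,P_{i_t}$ lies outside $B_j$, that is, if and only if $\{P_{i_1},\ldots,P_{i_t}\}\not\subseteq B_j$; it is $0$ in the complementary case $\{P_{i_1},\ldots,P_{i_t}\}\subseteq B_j$. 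Ranging over all $m$ columns, I would conclude that $\alpha_{i1}+\cdots+\alpha_{it}=\overrightarrow{1}$ holds if and only if $\{P_{i_1},\ldots,P_{i_t}\}$ is contained in no maximal unauthorized set.

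The remaining step identifies this containment condition with membership in $\mathcal{A}$. Here I would invoke the monotonicity recorded earlier in the excerpt: the family of unauthorized sets is downward closed, so over a finite participant set every unauthorized set is contained in one of its maximal elements $B_1,\ldots,B_m$. Hence a set is unauthorized if and only if it is a subset of some $B_j$, and therefore authorized if and only if it is a subset of no $B_j$. Matching this against the coordinatewise analysis of the previous paragraph gives both directions of the biconditional at once.

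The componentwise bookkeeping is routine; the step I expect to carry the real content is the last one, where the equality with $\overrightarrow{1}$ must be reconciled with the definition of $\mathcal{A}$. The one point that genuinely needs care is the claim that \emph{``not contained in any maximal unauthorized set''} is equivalent to \emph{``authorized''}; this rests entirely on the fact that the maximal sets in $\mathcal{A}^c_{max}$ determine the whole collection $\mathcal{A}^c$, precisely the property asserted when $\mathcal{A}^c_{max}$ was introduced. Once that equivalence is justified, the two sides of the stated iff coincide verbatim.
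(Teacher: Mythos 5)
Your proof is correct, and it is essentially the standard argument for this theorem: the paper itself gives no proof, deferring to Ghodosi et al.\ \cite{ghodosi1998construction}, and the proof there is precisely your column-wise covering analysis (reading the sum as Boolean OR) combined with the observation that, by monotonicity of the access structure, a set is authorized if and only if it is contained in no maximal unauthorized set $B_j$. You correctly flagged the only step carrying real content --- that $\mathcal{A}^c$ being downward closed and finite means $\mathcal{A}^c_{max}$ determines all of $\mathcal{A}^c$ --- so nothing is missing.
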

 
 cumulative scheme of \cite{ito1989secret} uses Shamir's threshold \cite{shamir1979} scheme where as Blakley's scheme is used by \cite{jackson1993cumulative}. A simple scheme using cumulative array and Karnin-Greene-Hellman threshold scheme \cite{karnin1983} proposed by Ghodosi et al \cite{ghodosi1998construction} is given below.
      \paragraph{\textbf{The Scheme}} \ \\ \\
      Let $\mathcal{A}_{min}=\mathcal{A}_1+\cdots+\mathcal{A}_\ell$ be a monotone access structure over the set of participants $\mathcal{P}={P_1,\ldots,P_n}$. Let $\mathcal{A}_{max}^c=B_1+\cdots+B_m$ be the set of maximal unauthorized subsets.The share distribution and reconstruction phases are given below.\\  
\textbf{Share Distribution Phase}
 \begin{enumerate}
    \item The dealer $D$, constructs the $n \times m$ cumulative array $C_{\mathcal{A}}=[b_{ij}]$,   where $n$ is the number of participants and $m$ is the cardinality of $\mathcal{A}_{max}^c$
      \item $\mathcal{D}$ used Karnin-Greene-Hellman(m,m) threshold scheme \cite{karnin1983} to generate $m$ shares $S_j,1 \le j \le m$.
      \item $\mathcal{D}$ gives shares $S_j$ privately to participant $P_i$ if and only if $b_{ij}=1$.
\end{enumerate} \ \\ 
\textbf{Secret Reconstruction Phase}
\begin{enumerate}
     \item The secret can be recovered by every access set using the modular addition over $\mathbb{Z}_q$
\end{enumerate}
      
      \begin{example}
      Let $n = 4$ and $\mathcal{A}_{min} = \{\{1,2\}, \{3,4\}\}.$ In this case, we obtain that $\mathcal{{A}}_{max}^c = \{\{1,3\}, \{1,4\}, \{2,3\},\{2,4\}\}$ and \mbox{$m = 4.$}
      
      The cumulative array for the access structure $\mathcal{A}$ is,
      
      
      $$\mathcal{C}_\mathcal{A}=\begin{bmatrix*}[r]           
      		 0 & 0 & 1 & 1 \\
             1 & 1 & 0 & 0 \\
             0 & 1 & 0 & 1 \\
             1 & 0 & 1 & 0
      \end{bmatrix*}$$
      In this case, $S_1 = \{s_3, s_4\},\, S_2 = \{s_1, s_2\},\, S_3 = \{s_2, s_4\}$
      and $S_4 = \{s_1, s_3\},$ where $s_1, \, s_2,\, s_3,\, s_4$ are the shares of a (4, 4)-threshold secret sharing scheme.
      \end{example}
      
\section{Permutation Ordered Binary System(POB)} 
The POB system is developed by Sreekumar et al \cite{sreekumar2009efficient} for the efficient storage and computations associated with share generation and reconstruction.Simple ex-or operations are used for the reconstruction of secret. The share generation algorithm is also linear and depends on the size of the secret.The shares generated are 1 bit less than the secret but still provides the same level of security and hence a reduction in storage space can be achieved.The POB system can be used to implement an $(n,n)$ scheme very efficiently.
\subsection{POB construction}
 The POB number system is represented by $POB(n,r)$, where $n$ and $r$ are positive integers and $n \ge r $. In this number system, we  represent all integers in the range \mbox{0, \ldots,$\!\nCr{n}{r}\!\!-1,$} as a binary string, say $B = b_{n-1}b_{n-2}\ldots b_0$, of length $n$, and having exactly $r$ 1s.
 
 Each digit of this number, say, $b_j$ is associated with its position value, given by
 \[ b_j. \nCr{j}{p_j},\;\;
   where,  \;\;  p_j  =  \sum_{i = 0}^j b_i  \;, \]
 and the value represented by the POB-number $B$, denoted by $V(B)$, will be the sum of position values of all of its digits.
 
 i.e.,
 \begin{equation}
    V(B) =  \sum_{j = 0}^{n-1} b_j. \nCr{j}{p_j} \label{eq:ValueComp}
 \end{equation}
 
 It can be proved that, since exactly  \nCr{n}{r} such binary strings exist, each number will have a distinct representation. In order to emphasize that a binary string, $B = b_{n-1}b_{n-2}\ldots b_0$ is a POB-number, we denote the same by using the suffix '$p$'. For example, $001110100_p$ is a POB(9, 4) number represented by 33. However, such a string, regarded as a binary number will have a decimal value of 116.
 
It is  proved that the POB-representation is unique in the sense that the binary representation of a POB-number is unique.
 The value of a POB-number, $V(B)$ of  $B = b_{n-1}b_{n-2}\ldots b_0$ computed by the formula~(\ref{eq:ValueComp}) given above, produces distinct values in the range $0,\cdots,\nCr{n}{r}\!\!-1$.Efficient algorithms are also developed to convert POB values into POB number and vice versa.
\subsection{(n,n) scheme using POB}
It is noted that efficient $(n,n)$ schemes are the building blocks of secret sharing schemes having more generalized monotone access structure.Karnin \cite{karnin1983} et al developed an unanimous consent scheme which is used in the Benaloh's and Leichter scheme \cite{benaloh1990generalized}.Ito et al \cite{ito1989secret} used Shamir's $(n,n)$ threshold scheme.POB system can be used for developing an efficient $(n,n)$ scheme which is secure and reliable.
The details of construction is given as Algorithm.\\ 
\begin{algorithm}[Sharing a secret among $n$ blocks] \ \\
\label{alg:BShrN}
Input:A single byte string $K = K_1K_2K_3\ldots K_8$. \newline
Output : $n$ shares $S_1, S_2, \ldots, S_n$ of length 7 bits.\\
\begin{tabbing}
\textbf{Step 1.}\;\= Let $A_1 , A_2, \ldots A_n$ be null strings of length 9 bits.\\
\textbf{Step 2.}\> Randomly assign $n$-2 POB(9,4)-numbers one for each \\
                \> of $A_i, 2 \leq i \leq n-1$.\\
                \> Let $r = \left\lceil\frac{V(A_2) + 1}{14}\right\rceil$ \\
\textbf{Step 3.}\> The input string $K$ is expanded to $T$\\\> by inserting one bit at position $r$.\\
                \> Compute the binary string \mbox{$T = T_1T_2 \ldots T_9$}
\end{tabbing}\[\hspace*{0.4in} T_i = \left\{ \begin{array}{ll}
                       K_i, & \mbox{if $i < r$} \\
                       K_{i-1},  & \mbox{if $i > r$} \\
                       0, & \mbox{if i = r and K is even parity} \\
                       1, & \mbox{if i = r and K is odd parity}\\
\end{array}
\right. \]
\begin{tabbing}
\textbf{Step 4.}\;\= Let $W = T \oplus A_2 \oplus A_3 \oplus \ldots \oplus A_{n-1}$\\
\textbf{Step 5.}\> Let $W = W_1W_2\ldots W_9$    \\
       \>  $noOfOne$ = 0; \\
       \>  For $i$ \= = 1 to $9$ do \\
       \>\>  if ($W_i$ \= = 1) then \\
       \>\>\>   $noOfOne = noOfOne + 1$; \\
       \>\>\>   if ($noOfOne$ is odd) $A_{1i}$ = 1;\\
       \>\>\>   else $A_{1i}$ = 0;         \\
\textbf{Step 6.}\> Randomly assign the rest null bits of $A_1$ to 0 or 1,\\
                \> let $A_1$ consists of four 1s and five 0s.\\
\textbf{Step 7.}\> Compute $A_n = W \oplus A_1$ \\
\textbf{Step 8.}\> For $i$\= = 1 to $n$ do \\
      \>\>$ S_i = V(A_i).$
       \end{tabbing}
\end{algorithm}
\begin{algorithm}[Recover the  secret  information]\ \\
\label{alg:BRecN}
Input : $n$ shares $S_1$, $S_2$, \ldots ,$S_n$ of length 7 bits each. \newline
Output: The secret information  $K = K_1K_2K_3\ldots K_8$.\\
\begin{tabbing}
\textbf{Step 1.} \= Let $A_1, A_2,\ldots A_n$ be the POB-numbers corresponding\\
                 \> to $S_1$, $S_2$, \ldots ,$S_n$ respectively and $r = \left\lceil\frac{S_2 + 1}{14}\right\rceil$  \\
\>Compute $T = A_1 \oplus A_2 \oplus A_3 \oplus \ldots \oplus A_n$\\
\>Let $T = T_1T_2 \ldots T_9$  \\
\textbf{Step 2.} \> For $i$ \= = 1 to $8$ do \\
           \>\> if ($i \geq r$) $j = i + 1$; \\
           \> else $j = i$; \\
       \>   $K_i = T_j$.\\
\textbf{Step 3.}\> The recovered secret is $K = K_1K_2K_3\ldots K_8$
\end{tabbing}
\end{algorithm}

\begin{example}
For a (5, 5) threshold scheme, secret $K = 10110110$ is taken.
\end{example}
Randomly assign five 0s and four 1s to 3 rows \{$A_2, A_3, A_4$\}. Therefore,
\begin{eqnarray}
  A_2 & = & 101100010, \nonumber \\
  A_3 & = & 010101001,  \, and \nonumber \\
  A_4 & = & 110010100. \nonumber
\end{eqnarray} 
Let the random number $r = \left\lceil\frac{V(A_2) + 1}{14}\right\rceil = \left\lceil\frac{102}{14}\right\rceil = 8.$

The expanded string $T$ as per step 3, of Algorithm~\ref{alg:BShrN} is  $T = 101101110$

Step 4. Computes \, $W$ = 100110001, \\ by Step 5., $A_1$ = 1\raisebox{-0.6ex}{**}01\raisebox{-0.6ex}{***}0, and \\ by step 6., $A_1$ becomes = 110010100
by Step 7, $A_5 = 010100101$\\

The shares are the indices: 113, 101, 48, 113, 46.
All the 5 shares are listed below:
\begin{eqnarray}
  S_1 & = & 1110001, \nonumber \\
  S_2 & = & 1100101, \nonumber \\
  S_3 & = & 0110000, \nonumber \\
  S_4 & = & 1110001, \, and \nonumber \\
  S_5 & = & 0101110. \nonumber
\end{eqnarray}
Recovery: Compute $T = A_1 \oplus A_2 \oplus A_3 \oplus A_4 \oplus A_5$, and get 101101110.
Deleting the 8\raisebox{0.6ex}{th} bit, we get secret as $K$ = 10110110.
\section{Proposed Generalized Secret Sharing Scheme}
The proposed scheme make use of (n,n) scheme using POB and cumulative arrays to  efficiently share a secret according to a generalized access structure.
\begin{algorithm}{Generalized Secret Sharing using POB}\\
Input:Access structure corresponds to a secret sharing scheme.\\
Output:Shares for each  participants corresponds to the given access structure.\\
\begin{tabbing}
\textbf{Step 1.}\=Find the maximal unauthorized set 		$\mathcal{A}^{c}_{max}$ 
				corresponds to the given access structure.\\
\textbf{Step 2.}\>The dealer $D$, constructs the $n \times m$ cumulative array $C_{\mathcal{A}}=[b_{ij}]$,   where $n$ \\
\>is the number of participants and $m$ is the cardinality of $\mathcal{A}_{max}^c$. \\         
\textbf{Step 3.}\>  $\mathcal{D}$ uses $(m,m)$ POB scheme  to generate $m$ shares $S_j,1 \le j \le m$.\\
\textbf{Step 4.}$\mathcal{D}$ gives shares $S_j$ privately to participant $P_i$ if and only if $b_{ij}=1$.    
\end{tabbing}
\end{algorithm}
\begin{algorithm}{Secret Reconstruction using POB}\\
Input:Shares corresponds to the participants.\\
Output:Shared secret corresponds to the authorized set or error.\\
\begin{tabbing}
\textbf{Step 1.}\=From the shares generate the POB number.\\
\textbf{Step 2.}\>The secret can be reconstructed by  ex-oring the shares corresponds\\
\> to an authorized set.\\  
\textbf{Step 3.}\>  For an unauthorized set the algorithm gives an error else the secret is returned.\\
\end{tabbing}
\end{algorithm}
\section{Conclusion}
In this paper we explored the secret sharing schemes realizing the general access structure.Several schemes are proposed but the share size is a major concern.Number of shares received by the participant grows exponentially in generalized secret sharing.We have proposed a scheme with cumulative arrays and $(n,n)$ threshold scheme using POB.The size of the share is smaller in this case and also the secret can be easily reconstructed by simple XOR operation.An 8 bit secret can be shared with a share of 7 bit size.The probability of guessing the share reduces as the size of the secret to be shared increases.
\bibliographystyle{splncs03}
\bibliography{ss}
\end{document}